
\documentclass[letterpaper, 10 pt, conference]{ieeeconf}  

\IEEEoverridecommandlockouts                              

\overrideIEEEmargins                                      



\usepackage{graphics} 
\usepackage{epsfig} 
\usepackage{mathptmx} 
\usepackage{times} 
\usepackage{amsmath} 
\usepackage{amssymb}  

\usepackage{graphics,graphicx,psfrag,array,eepic}
\usepackage{multirow}
\usepackage{hhline}
\usepackage{mathtools}
\usepackage[strict]{changepage}
\usepackage{epsfig}
\usepackage{amsmath}
\usepackage{amssymb}

\usepackage{float}
\restylefloat{table}

\usepackage{multicol}
\usepackage{color}
\usepackage{balance}
\usepackage{cite}
\usepackage{booktabs} 
\usepackage{caption,booktabs}
\captionsetup{justification = centering}
\usepackage[tableposition=top]{caption}
\usepackage{subcaption}

\usepackage{lipsum}
\usepackage{comment}
\usepackage{amssymb}
\usepackage{graphicx}
\usepackage{algorithm}
\usepackage{algorithmic}
\usepackage{comment}

\usepackage{mathtools, cuted}
\usepackage[thinc]{esdiff}

\usepackage{psfrag}
\usepackage{array}
\usepackage{latexsym,theorem}
\usepackage{amsfonts,amssymb,amsbsy}
\usepackage{tikz}
\usetikzlibrary{decorations.shapes,shapes.geometric,shadows,arrows,automata,positioning,calendar,mindmap,backgrounds,scopes,chains,er,3d,matrix,fit}
\usepackage{tikz-cd}
\usepackage{pgfplots}
\usetikzlibrary{intersections, pgfplots.fillbetween}

\newcommand{\R}{\mathbb{R}}


%

\newcommand{\diag}{\operatorname{diag}}

\newtheorem{theorem}{Theorem}

\newtheorem{assumption}{Assumption}
\newtheorem{remark}{Remark}


%


\usepackage{color}

\title{\LARGE \bf
AIMD scheduling and resource allocation in distributed \\ computing systems}
\author{Eleftherios Vlahakis, Nikolaos Athanasopoulos, Se\'{a}n McLoone
\thanks{*This work is supported by the CHIST-ERA 2018 project DRUID-NET
``Edge Computing Resource Allocation for Dynamic Networks'' \texttt{https://druidnet.netmode.ntua.gr/}}
\thanks{Authors are with the School of Electronics, Electrical Engineering and Computer Science, Queen’s University Belfast, Northern Ireland, UK. E-mail addresses: {\tt\small \{e.vlahakis, n.athanasopoulos, s.mcloone\}@qub.ac.uk}}%
}

\begin{document}

\maketitle
\thispagestyle{empty}
\pagestyle{empty}

\begin{abstract}
We consider the problem of simultaneous scheduling and resource allocation of an incoming flow of requests to a set of computing units. By representing each computing unit as a node, we model the overall system as a multi-queue scheme. Inspired by congestion control approaches in communication networks, we propose an AIMD-like (additive increase multiplicative decrease) admission control policy that is stable irrespective of the total number of nodes and AIMD parameters. The admission policy allows us to establish an event-driven discrete model, triggered by a locally identifiable enabling condition. Subsequently, we propose a decentralized resource allocation strategy via a simple nonlinear state feedback controller, guaranteeing global convergence to a bounded set in finite time. Last, we reveal the connection of these properties with Quality of Service specifications, by calculating local queuing time via a simple formula consistent with Little's Law. 
\end{abstract}


\section{Introduction}









Distributed computing is a new paradigm emerging to address the growing demand for extensive, real-time computations at the \textit{edge} as a result of the growing number of end-users (e.g., smart devices, sensors) connected to the edge of the Internet. Although this emerging technology opens new opportunities for more sophisticated applications (see, e.g., \cite{Kehoe2015,Mach2017,Abbas2018}), it presents several research challenges, especially in the context of resource allocation and control of edge-servers due to factors such as the need to take account of latency constraints, limited capacity of edge-servers, and its inherent decentralized structure.

Feedback control has been a powerful mathematical tool for tackling management problems in the context of modern computer systems \cite{Karamanolis2005}. Given a representative dynamical model, control theory allows analytical derivation of formal guarantees and certificates. However, modelling computer systems is a formidable task by itself, thus, many works rely on application-specific models obtained via system identification methods. See for example \cite{Wang2005,Dechouniotis2015,Avgeris2019}. Focusing on a more abstract modelling paradigm agnostic to each individual node specificities, we follow a queueing system modelling approach that enhances scalability, naturally, at the expense of accuracy loss. Notable works avoiding application-specific modelling can be found in \cite{Maggio2010,Kalyvianaki2014,Makridis2018}.

The control problem considered in this paper consists of $1)$ the scheduling of a stream of requests, and $2)$ the resource allocation of a set of computing units associated with a specific application. Representing each computing unit as a node and associating each node with a queue, we model the entire scheme as a multi-queue system. We assume that there is no interaction between nodes and computing units are independent from each other. A central node acts as an aggregation point, receiving all requests and dispatching them to individual nodes. Queues in this work are consistent with the \textit{First Come First Served} (FCFS) selection policy. 


Our approach to scheduling and resource allocation is motivated by the Additive Increase Multiplicative Decrease (AIMD) algorithm, a celebrated method in network management. The AIMD algorithm was originally introduced in \cite{Chiu1989} for tackling congestion phenomena in computer networks in a robust and decentralized manner requiring minimum interaction between nodes. Since then, it has become a fundamental building block of the Transmission Control Protocol (TCP) widely used across the Internet. An excellent and comprehensive study of the AIMD algorithm with several extensions and applications can be found in \cite{Corless2016}. 

In this paper, we study how an AIMD-inspired simple admission control policy can be utilized for general scheduling problems. A typical AIMD model results in an event-driven discrete controller which is triggered by a capacity event associated with constraints, e.g., bandwidth constraints. Berman \textit{et al.} in \cite{Berman2004} and Shorten \textit{et al.} in \cite{Shorten2005} show that such a control scheme can be formulated as positive system, thus, stability and convergence properties can be derived from the Perron–Frobenius Theorem. A first challenge we face is that a positive system formulation is not possible in our case due to the absence of a capacity constraint in a scheduling task. Instead, we consider a queue clearance event and manage to show stability via a significant result in Linear Algebra (cf. \cite{Golub1973,Bunch1978}) involving the eigenproblem of rank-one perturbations of symmetric matrices (Theorems \ref{thm:rankOnePerturb} and \ref{thm:PhiStability}). This formulation leads to a new admission control algorithm with AIMD structure which is stable irrespective of the AIMD tuning and the number of nodes, and inherits attractive features of the standard AIMD algorithm (e.g., fairness among nodes, tunable convergence rate) \cite{Shorten2006}. To the best of our knowledge, this paper presents a new admission control policy with AIMD dynamics for scheduling tasks.

As a result of the simplicity of the AIMD scheduling policy proposed, we formulate a resource allocation strategy defined as a decentralized globally stabilizing nonlinear feedback controller. Following a set-theoretic approach, 
we show that under the proposed resource allocation law, individual queues are bounded, and, further, converge in finite time to a well-defined interval which is invariant \cite{Blanchini2015}. This effectively permits a priori analysis of Quality of Service (QoS) metrics, such as queueing time. Overall, scheduling and resource allocation lie in the same control loop leading to a simple decentralized system which is stable, scalable, and locally configurable.

Unlike standard stochastic methods, see, e.g.,  \cite{Maguluri2013,Maguluri2014}), we follow a deterministic approach to workload modelling. This choice simplifies the simultaneous scheduling and resource allocation problem, and most importantly, leads to deterministic performance certificates.  A relaxation of our results towards a non-deterministic workload as well as the incorporation of constraints will be considered in future work. 

The remainder of the paper is organized as follows. The notation used in the paper is introduced in Section \ref{section:notation}, while underpinning definitions and assumptions are given in Section \ref{section:definitions}. The main results of the paper, namely, the AIMD scheduling strategy, the resource allocation control, and the calculation of queueing time are then presented in Sections \ref{section:AIMD}, \ref{section:resourceAllocation}, and \ref{section:queueingTime}, respectively. In Section \ref{section:example} we highlight our results via an illustrative numerical example. Finally, Section \ref{section:conclusion} discusses our main results and future research directions.


\section{Notation}\label{section:notation}

The field of real numbers is denoted by $\mathbb{R}$. $\mathbb{R}^{n}$ denotes the $n$-dimensional vector space over the field $\mathbb{R}$, and $\mathbb{R}^{n\times m}$ denotes the set of $n\times m$ real matrices. The transpose of $\xi$ is denoted by $\xi'$. Let $x_{1},\;\ldots,\;x_{n}$ be vectors not necessarily of the same dimensions. Then, $\hat{x}=\mathrm{Col}(x_{1},\;\ldots,\;x_{n}) = (x_{1}'\;\cdots\;x_{n}')'$. Let $a_{1},\;\ldots,\;a_{n}\in \mathbb{R}$, then, $a = (a_{1},\ldots,a_{n})\in \mathbb{R}^{n}$, and $A = \mathrm{diag}(a_{1},\;\ldots,\;a_{n})$ is a diagonal matrix, with $a_{1},\;\ldots,\;a_{n}$ as its diagonal entries. We denote by $\det(A)$ the determinant of a square matrix $A$. The identity matrix of dimension $m\times m$ is denoted by $I_{m}\in\mathbb{R}^{m\times m}$ unless the dimensions are obvious in which case the subscript will be omitted. Matrix $\Xi\in\mathbb{R}^{n\times n}$ is called symmetric if $\Xi' = \Xi$. Let $\lambda_{i}(\Phi)$ be the $i$th eigenvalue of matrix $\Phi\in\mathbb{R}^{m\times m}$, with $i=1,\ldots,m$. Then, the spectrum of $\Phi$ is denoted by $\sigma(\Phi) = \{\lambda_{1}(\Phi),\ldots,\lambda_{m}(\Phi)\}$. A matrix $\Phi\in\mathbb{R}^{m\times m}$ is  Schur if all its eigenvalues strictly lie inside the unit circle, i.e., $|\lambda_{i}(\Phi)|<1$, $i=1,\ldots,m$.

\section{Definitions and Basic Assumptions}\label{section:definitions}

\subsection{Single-queue system}

We define a \textit{request} as an individual demand for computing resources provided by a computing node. A \textit{computing node} is defined as the physical (or virtual) computing environment, consisting of hardware, software, and network resources, whereby a request is executed. A \textit{queue} is defined as the waiting mechanism whereby a request arriving at a node is temporarily put on hold until it is selected for service from among other requests that are waiting. Here, we consider queues consistent with the \textit{First Come First Served} (FCFS) selection principle. 

A \textit{queueing system} \cite{Kleinrock1975,Cassandras2008} is defined as the dynamic relationship that is developed between a stream of request arrivals at and a flow of request departures from a computing device, respectively, in the presence of a queue. From a mathematical perspective, a queue acts as an integrator of the difference between arrival and departure rates. A simple queueing system is depicted in Fig. \ref{fig:single-queue} which is consistent with the following notation. 

Denoting by $\mathbf{X}(t)$ and $\mathbf{Y}(t)$ the arrivals at and departures from a queue, respectively, in interval $[0,\;t]$, the number of queued requests at time $t$, $\mathbf{Q}(t)$, is defined as the difference between arrivals and departures in interval $[0,\;t]$. Note that exact knowledge of $\mathbf{X}(t)$ and $\mathbf{Y}(t)$ is typically impossible in real applications, with arrivals and departures considered as stochastic processes described by appropriate probability distributions. A comprehensive overview of stochastic queueing systems can be found in \cite{Kleinrock1975}. Here, to highlight the admission and resource allocation control strategies proposed in the paper, we simplify our model structure following a deterministic approach. Specifically, we assume the following. 
\begin{assumption}\label{ass:1}
    \begin{itemize}
        \item[]
        \item[(A1)] The arrival rate, denoted by $\lambda(t) = \frac{\mathbf{d}}{\mathbf{d}t}\mathbf{X}(t)$, is constant. 
        \item[(A2)] Requests arriving at a queueing system are identical in terms of the combination of computing resources (CPU time, memory, disk space) required to serve them.
        \end{itemize}
\end{assumption}
%
%
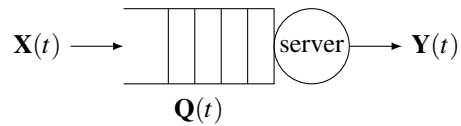
\begin{figure}[ht]
\centering
\scalebox{1}{
\begin{tikzpicture}[>=latex]
\draw (0,0) -- ++(2cm,0) -- ++(0,-1.cm) -- ++(-2cm,0);
\foreach \i in {1,...,4}
  \draw (2cm-\i*10pt,0) -- +(0,-1.cm);

\draw (2.5,-0.5cm) circle [radius=0.5cm] node[] {server};

\draw[->] (3.,-0.5) -- +(20pt,0) node[right] {$\mathbf{Y}(t)$};
\draw[<-] (0,-0.5) -- +(-20pt,0) node[left] {$\mathbf{X}(t)$};
\node at (1cm,-1.35cm) {$\mathbf{Q}(t)$};
\end{tikzpicture}}
\caption{A simple queueing system.
} 
\label{fig:single-queue}
\end{figure}

\subsection{Queueing time}

\textit{Queueing time} (also termed waiting time or latency \cite{Kleinrock1975}) is the main performance metric of a queueing system (e.g, in edge computing applications), expressing the time that a request is expected to be queued before processed. Given the knowledge of arrivals at and departures from a queue in interval $[0,\;t]$, 
we define queueing time as
\begin{equation}\label{eq:queueingTime}
    T_{q}(t) =\frac{\int_{0}^{t}\mathbf{Q}(s)\mathbf{d}s}{\mathbf{X}(t)},
\end{equation}
where $\mathbf{Q}(t) =  \mathbf{X}(t) - \mathbf{Y}(t)$. Note that the integral in the numerator on the right side of \eqref{eq:queueingTime} expresses the aggregate queueing time of all requests arriving in $[0,\;t]$ measured in $\textnormal{requests}\times\textnormal{seconds}$. By averaging the aggregate queueing time and the request arrivals, respectively, by the length of interval $[0,\;t]$ as $T_{q}(t) =\frac{\int_{0}^{t}\mathbf{Q}(s)\mathbf{d}s/t}{\mathbf{X}(t)/t}$, it is easy to see that definition \eqref{eq:queueingTime}  is in agreement with \textit{Little's Law}, which states that the average number of queued requests $\int_{0}^{t}\mathbf{Q}(s)\mathbf{d}s/t$ is equal to the product of average arrival rate $\mathbf{X}(t)/t$ and queueing time $T_{q}(t)$. A detailed description of \textit{Little's Law} can be found in \cite[Chapter~2]{Kleinrock1975}.




\subsection{Event-driven discretization and event generator}

An event generator is introduced as the mechanism indicating time instants at which a well-defined (triggering) condition $\mathcal{C}_{\varepsilon}$ is satisfied. Condition satisfaction can be written as
\begin{equation}
    \mathcal{C}_{\varepsilon}(t_{k}) = \texttt{true},
\end{equation}
where $t_{k}$ denotes the time instant at which the $k$th event occurs (is generated). Note that time events can be modelled by casting the continuous time as an autonomous state variable, namely,
\begin{equation}
    t(k+1) = t(k) + T(k),
\end{equation}
where $T(k)$ is the time-varying sampling period.


We emphasize that the facilitation of an aperiodic model (with respect to time) derivation will be the result of two main design strategies, namely, 
\begin{itemize}
    \item[$1)$] the introduction of a \textit{batch} queue into the system,
    \item[$2)$] the adoption of an AIMD admission control policy.
\end{itemize}
This strategic choice is now exemplified via a simple tandem queueing system.

\subsection{Tandem queueing system with AIMD dynamics}

We consider the two-queue system (also termed tandem queueing system) shown in Fig. \ref{fig:tandem1}, where $\lambda(t)$ is a piece-wise differentiable function representing workload, $\delta(t)$ is the number of queued requests waiting at queue $Q_{1}$ to be dispatched to queue $Q_{2}$ at an admission rate $u(t)$, while $w(t)$ and $\gamma(t)$ represent queued requests and service rate, respectively. 
\begin{figure}[ht]
\centering
\scalebox{1}{
\begin{tikzpicture}[>=latex]
\draw (0,0) -- ++(2cm,0) -- ++(0,-.75cm) -- ++(-2cm,0);
\foreach \i in {1,...,4}
  \draw (2cm-\i*10pt,0) -- +(0,-.75cm);
\draw (2.375,-0.375cm) circle [radius=0.375cm] node[] {$u(t)$};
\draw[<-] (0,-0.375) -- +(-20pt,0) node[left] {$\lambda(t)$};
\node at (1cm, .25cm) {$Q_{1}$};
\node at (1cm,-1.05cm) {$\delta(t)$};
\begin{scope}[xshift = 3.5cm]
\draw (0,0) -- ++(2cm,0) -- ++(0,-.75cm) -- ++(-2cm,0);
\foreach \i in {1,...,4}
  \draw (2cm-\i*10pt,0) -- +(0,-.75cm);
\draw (2.375,-0.375cm) circle [radius=0.375cm] node[] {$\gamma(t)$};
\draw[<-] (0,-0.375) -- +(-20pt,0);
\node at (1cm, .25cm) {$Q_{2}$};
\node at (1cm,-1.05cm) {$w(t)$};
\end{scope}
\end{tikzpicture}}
\caption{A tandem queueing system.} 
\label{fig:tandem1}
\end{figure}
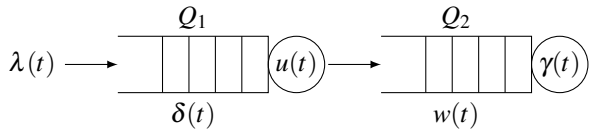
Using this notation, the continuous-time dynamics of the two-queue system can be written in a compact form as
\begin{align}\label{eq:continousSystem_delta_w}
\begin{bmatrix} \dot{\delta}(t) \\ \dot{w}(t) \end{bmatrix} = \begin{bmatrix}
     1 & -1 & 0 \\ 0 & 1 & -1 
\end{bmatrix} \begin{bmatrix}
     \lambda(t) \\ u(t) \\ \gamma(t)
\end{bmatrix}.
\end{align}

Before proceeding with the discretization of the model, we define a triggering condition that enables the generation of an event indicating the commencement of a new cycle. Let $u(t)$ be an admission control policy such that 
\begin{equation}\label{eq:condition_delta}
    \delta(t_{k}) = 0,
\end{equation}
i.e., all the requests that have arrived at queue $Q_{1}$ by time $t_{k}$ have been admitted to queue $Q_{2}$. Hence, in this regard, $Q_{1}$ instantaneously becomes empty at $t_{k}$. To ensure that condition \eqref{eq:condition_delta} can always be satisfied at a finite time for a constant $\lambda(t)$, we design $u(t)$ as an AIMD controller as follows. Let 
\begin{align}
    u(t_{k}^{-}) = \lim_{\substack{t\to t_{k} \\ t< t_{k}}} u(t), \;\;
    u(t_{k}^{+}) = \lim_{\substack{t\to t_{k} \\ t> t_{k}}} u(t),
\end{align}
where $t_{k}^{-}$ is the ending time of the $(k-1)$th cycle, and $t_{k}^{+}$ the starting time of the $k$th cycle. Since $\delta(t_{k}^{+}) = 0$, we let
\begin{equation}\label{eq:backoffing}
    u(t_{k}^{+}) = \beta u(t_{k}^{-}),
\end{equation}
where $0 < \beta < 1$ is called the \textit{backoff parameter}. While queue $Q_{1}$ remains empty, admission control $u(t^{+})$ shrinks to a fraction of $u(t^{-})$ according to \eqref{eq:backoffing}. This is called the \textit{Multiplicative Decrease} (MD) phase of the cycle. By the time queue $Q_{1}$ starts growing, admission rate $u(t)$ increases in a ramp fashion as
\begin{equation}\label{eq:continuous_u(t)}
    u(t) = \beta u(t_{k}^{-}) + \alpha (t-t_{k}), \; t\geq t_{k},
\end{equation}
where the slope of the ramp $\alpha >0$ is called the \textit{growth rate}. Let now $t_{k+1}^{-}$ be the ending time of the $k$th cycle, i.e., $\delta(t_{k+1}) = 0$. Then, the duration of the $k$th cycle is called the \textit{cycle period} and is denoted by $T(k) = t_{k+1} - t_{k}$. Similarly, the interval $(t_{k}^{+},\;t_{k+1}^{-})$ is called the \textit{Additive Increase} phase. Denoting time instants by $k=t_k$, with 
$k\geq 0$, an event-driven discrete model is derived as
\begin{equation}
    \begin{bmatrix}
         w(k+1) \\ u(k+1) \\ \delta(k+1)
    \end{bmatrix} =
    \begin{bmatrix}
         w(k) + (\beta u(k) + \frac{\alpha}{2}T(k) - \gamma(k))T(k) \\
         \beta u(k) + \alpha T(k) \\ \delta(k)
    \end{bmatrix},
\end{equation}
where $u(k)$ is an AIMD controller with triggering condition
\begin{equation}
    \delta (k) = 0.
\end{equation}
Next, we generalize the AIMD admission control approach to a system with multiple queues, and examine the properties of the AIMD algorithm and its effect on the entire system dynamics.

\subsection{Multi-queue system}

We consider a set of $n$ computing nodes
represented by a multi-queue system, where each node is modelled by a queue combined with a (physical or virtual) computing environment. We assume that a constant workload $\lambda$ enters the system via a batch queue, which is independent of the computing nodes. The workload is manifested as a flow of requests that are dispatched to $n$ nodes according to an admission control policy $u_{i}(t)$, $i=1,\;\ldots,\;n$, with each node representing a computing unit. We denote the number of queued requests that have not yet been admitted at time $t$ by $\delta(t)$, and the number of admitted requests waiting to be selected for service by the $i$th node at time $t$ by $w_{i}(t)$. Service rate of the $i$th node is denoted by $\gamma_{i}$, $i=1,\;\ldots,\;n$. The entire system is depicted in Fig. \ref{fig:tandem_multi_2}. Next, we examine the aggregate dynamics of a large-scale AIMD admission controller.
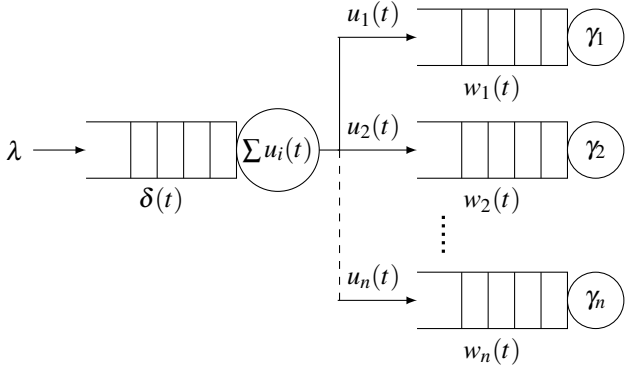
\begin{figure}[ht]
\centering
\scalebox{1}{
\begin{tikzpicture}[>=latex]
\draw (0,0) -- ++(2cm,0) -- ++(0,-.75cm) -- ++(-2cm,0);
\foreach \i in {1,...,4}
  \draw (2cm-\i*10pt,0) -- +(0,-.75cm);
\draw (2.55,-0.375cm) circle [radius=0.55cm] node[] {$\sum u_{i}(t)$};
\draw[-] (3.1, -.375cm) -- (3.37, -.375cm) -- (3.37, -.375cm +1.5cm);
\draw[dashed]  (3.37, -.375cm) -- (3.37, -.375cm -2cm);
\draw[very thick, dotted]  (4.75, -1.25cm) -- (4.75, -1.75cm);
\draw[<-] (0,-0.375) -- +(-20pt,0) node[left] {$\lambda$};
\node at (1cm,-1.05cm) {$\delta(t)$};
\begin{scope}[xshift = 4.4cm, yshift = 1.5cm]
\draw (0,0) -- ++(2cm,0) -- ++(0,-.75cm) -- ++(-2cm,0);
\foreach \i in {1,...,4}
  \draw (2cm-\i*10pt,0) -- +(0,-.75cm);
\draw (2.375,-0.375cm) circle [radius=0.375cm] node[] {$\gamma_{1}$};
\draw[<-] (0.,-0.375) -- +(-30pt,0) node[anchor = south west] {$u_{1}(t)$};
\node at (1cm,-1.05cm) {$w_{1}(t)$};
\end{scope}
\begin{scope}[xshift = 4.4cm, yshift = 0cm]
\draw (0,0) -- ++(2cm,0) -- ++(0,-.75cm) -- ++(-2cm,0);
\foreach \i in {1,...,4}
  \draw (2cm-\i*10pt,0) -- +(0,-.75cm);
\draw (2.375,-0.375cm) circle [radius=0.375cm] node[] {$\gamma_{2}$};
\draw[<-] (0.,-0.375) -- +(-30pt,0) node[anchor = south west] {$u_{2}(t)$};
\node at (1cm,-1.05cm) {$w_{2}(t)$};
\end{scope}
\begin{scope}[xshift = 4.4cm, yshift = -2.cm]
\draw (0,0) -- ++(2cm,0) -- ++(0,-.75cm) -- ++(-2cm,0);
\foreach \i in {1,...,4}
  \draw (2cm-\i*10pt,0) -- +(0,-.75cm);
\draw (2.375,-0.375cm) circle [radius=0.375cm] node[] {$\gamma_{n}$};
\draw[<-] (0.,-0.375) -- +(-30pt,0) node[anchor = south west] {$u_{n}(t)$};
\node at (1cm,-1.05cm) {$w_{n}(t)$};
\end{scope}
\end{tikzpicture}}
\caption{A multi-queue system with AIMD admission control policy.} 
\label{fig:tandem_multi_2}
\end{figure}

\section{AIMD Admission Control}\label{section:AIMD}


We consider the system of $n$ computing nodes depicted in Fig. \ref{fig:tandem_multi_2}. The number of queued (unadmitted) requests at the beginning of the $(k+1)$th event is given by
\begin{equation}\label{eq:deltak+1}
    \delta(k+1) = \delta(k) + \lambda T(k) - \int_{t_{k}}^{t_{k+1}} \sum u_{i}(t) \mathbf{d}t.
\end{equation}
We recall that at each event $k, \; k+1,\;\ldots$, we have
\begin{equation}\label{eq:triggeringcondition}
    \delta(k) = \delta(k+1) = \cdots = 0.
\end{equation}
The AIMD formulation of the admission controller  yields an exact formula for the cycle period $T(k)$ permitting a closed form of the aggregate admission control system. To this purpose, during the AI phase, the $i$th admission rate ramps up as follows,
\begin{equation}\label{eq:u_i(t)}
    u_{i}(t) = \beta_{i}u_{i}(t_{k}) + \alpha_{i}(t-t_{k}), \; i = 1,\ldots,n,
\end{equation}
which is a continuous-time controller for $t \in [t_{k},\;t_{k+1})$. Based on condition \eqref{eq:triggeringcondition}, the event-based dynamics of the $i$th admission controller is written as:
\begin{equation}\label{eq:u_ik+1}
    u_{i}(k+1) = \beta_{i}u_{i}(k) + \alpha_{i}T(k), \; i = 1,\ldots,n.
\end{equation}
In view of the triggering condition \eqref{eq:triggeringcondition} and using \eqref{eq:u_ik+1} in \eqref{eq:deltak+1}, we get
\begin{equation}
    \lambda T(k) = \sum_{i=1}^{n} (2\beta_{i}u_{i}(k) + \alpha_{i}T(k)) \frac{T(k)}{2}\footnote{the integral on the right side of \eqref{eq:deltak+1} is the sum of areas of $n$ trapezoids due to the AIMD dynamics of $u_{i}$, i = 1,\ldots, n.},
\end{equation}
or
\begin{equation}\label{eq:lambda}
    \lambda = \sum_{i=1}^{n} (\beta_{i}u_{i}(k) + \frac{\alpha_{i}}{2}T(k)),
\end{equation}
from which, the cycle period is defined as
\begin{equation}\label{eq:eventPeriod}
    T(k) = \frac{\lambda - \sum_{i=1}^{n}\beta_{i} u_{i}(k) }{\sum_{i=1}^{n}\frac{\alpha_{i}}{2}}.
\end{equation}
From \eqref{eq:eventPeriod}, we may write that
\begin{equation}
    u_{i}(k+1) = \beta_{i}u_{i}(k) +\alpha_{i} \frac{\lambda - \sum_{i=1}^{n}\beta_{i} u_{i}(k) }{\sum_{i=1}^{n}\frac{\alpha_{i}}{2}}.
\end{equation}
Now, defining
\begin{align}
    U(k) &= \mathrm{Col}(u_{1}(k),\ldots,u_{n}(k)),\\
    \bar{\alpha} &= \frac{1}{\sum_{j=1}^{n}\alpha_{j}}\mathrm{Col}(\alpha_{1},\ldots,\alpha_{n}),\\
    B &= \diag(\beta_{1},\ldots,\beta_{n}),\\
    \beta & = (\beta_{1},\ldots,\beta_{n}),
\end{align}
the aggregate admission control system can be expressed as
\begin{equation}\label{eq:U(k+1)}
    U(k+1) = \Phi U(k) + 2\bar{\alpha}\lambda,
\end{equation}
where $\Phi = B - 2\bar{\alpha} \beta'$, with $\bar{\alpha}'\mathbf{1}=1$. Next, we show that system \eqref{eq:U(k+1)} is stable, thus, the $i$th AIMD admission controller $u_{i}(k)$ converges to a unique equilibrium point $u_{i}^{\ast}$. We first present the following result, which appears in several works in the context of Linear Algebra, see, e.g., \cite[Theorem~1]{Bunch1978}, \cite[Section~5]{Golub1973}.  
\begin{theorem}\label{thm:rankOnePerturb}
Let $C = D + \rho zz'$, where $D\in\mathbb{R}^{n\times n}$ is diagonal, $\rho\in\mathbb{R}$, and $z\in\mathbb{R}^{n}$. Let $d_{1}\leq d_{2}\leq \ldots \leq d_{n}$ be the eigenvalues of $D$, and $c_{1}\leq c_{2}\leq \ldots \leq c_{n}$ be the eigenvalues of $C$. Then,
\begin{itemize}
    \item[i.)] $d_{1}\leq c_{1} \leq d_{2}\leq c_{2} \leq \ldots \leq d_{n} \leq c_{n}$ if $\rho>0$,
    \item[ii.)] $c_{1}\leq d_{1} \leq c_{2}\leq d_{2} \leq \ldots \leq c_{n} \leq d_{n}$ if $\rho < 0$. 
\end{itemize}
If $d_{1},\ldots,d_{n}$ are distinct and all the elements of $z$ are nonzero, then $c_{1},\ldots,c_{n}$, namely the eigenvalues of $C$, strictly separate the eigenvalues of $D$.
\end{theorem}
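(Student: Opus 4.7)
My plan is to use two complementary tools: the Courant--Fischer minimax characterization for the weak interlacing in parts (i)--(ii), and the \emph{secular equation} (Weinstein--Aronszajn identity) to sharpen to strict separation under the nondegeneracy hypotheses.

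First I would handle part (i), assuming $\rho>0$. Write $C=D+\rho zz'$ with $\rho zz'\succeq 0$. For the lower bound, the Rayleigh quotient satisfies $v'Cv = v'Dv+\rho(z'v)^2\ge v'Dv$, so Courant--Fischer immediately gives $c_k\ge d_k$ for all $k$. For the upper bound $c_k\le d_{k+1}$, let $e_1,\ldots,e_n$ be the standard basis (the eigenvectors of $D$, ordered so that the diagonal entries are nondecreasing), take the $(k+1)$-dimensional subspace $V_{k+1}=\mathrm{span}(e_1,\ldots,e_{k+1})$, and inside it choose the $k$-dimensional subspace $W=\{v\in V_{k+1}:z'v=0\}$ (nonempty since one linear constraint is imposed on $k+1$ dimensions). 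For every unit $v\in W$ one has $v'Cv=v'Dv\le d_{k+1}$, and the minimax characterization $c_k\le\max_{v\in W,\,\|v\|=1}v'Cv$ yields $c_k\le d_{k+1}$. Part (ii) follows by applying (i) to $-C=-D+(-\rho)zz'$ and reversing the eigenvalue ordering.

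Next I would prove the strict separation under the additional assumptions that the $d_i$ are distinct and all $z_i\neq 0$. The idea is to identify the eigenvalues of $C$ with the roots of a scalar rational equation. For $\lambda\notin\sigma(D)$, factor
\begin{equation*}
\det(\lambda I-C)=\det(\lambda I-D)\,\det\bigl(I-\rho(\lambda I-D)^{-1}zz'\bigr)=\det(\lambda I-D)\,f(\lambda),
\end{equation*}
where
\begin{equation*}
f(\lambda)=1-\rho\sum_{i=1}^{n}\frac{z_i^2}{\lambda-d_i}.
\end{equation*}
Under the nondegeneracy hypotheses no $d_i$ is a root of the right-hand side, so the spectrum of $C$ coincides exactly with the zeros of $f$. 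Then I would study $f$: on each open interval $(d_i,d_{i+1})$ the derivative $f'(\lambda)=\rho\sum_i z_i^2/(\lambda-d_i)^2$ has constant sign (sign of $\rho$), so $f$ is strictly monotonic there; the limits at the two endpoints are $\pm\infty$ with opposite signs, so by the intermediate value theorem $f$ has exactly one zero in each such interval. Combined with the weak interlacing from (i)--(ii), this places exactly one eigenvalue $c_k$ in the open interval $(d_k,d_{k+1})$ (respectively $(d_{k-1},d_k)$ for $\rho<0$), yielding strict inequalities. A final counting argument using the sign of $f$ as $\lambda\to\pm\infty$ accounts for the remaining eigenvalue, which lies in $(d_n,\infty)$ when $\rho>0$ and in $(-\infty,d_1)$ when $\rho<0$.

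The main obstacle I anticipate is the bookkeeping when degeneracies occur in part (i) (repeated $d_i$ or some $z_i=0$): the secular equation then develops removable singularities and the spectrum of $C$ contains some of the $d_i$ with reduced multiplicity. I would handle this either by a continuity/perturbation argument (perturb $D$ and $z$ to the nondegenerate regime, apply the strict version, and pass to the limit since eigenvalues depend continuously on matrix entries), or by reducing to the nondegenerate subproblem on the orthogonal complement of the kernel directions. The minimax argument above, however, needs no modification and already yields the inclusive interlacing claimed in (i)--(ii).
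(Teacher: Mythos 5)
Your proof is correct. Note that the paper itself gives no proof of this theorem: it is quoted as a known result with citations to Golub (1973) and Bunch--Nielsen--Sorensen (1978), so there is no in-paper argument to compare against. Your two-stage strategy --- Courant--Fischer for the weak interlacing in (i)--(ii), then the secular equation $f(\lambda)=1-\rho\sum_i z_i^2/(\lambda-d_i)$ for strict separation under the nondegeneracy hypotheses --- is essentially the classical argument found in those cited references, and all the key steps check out: the Rayleigh-quotient bound $v'Cv\ge v'Dv$ gives $c_k\ge d_k$; the subspace $W=\{v\in V_{k+1}:z'v=0\}$ has dimension at least $k$ and gives $c_k\le d_{k+1}$; and the monotonicity of $f$ on each interval between consecutive poles, together with the sign of the limits at the endpoints, pins exactly one root in each open interval. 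One point worth writing out rather than asserting: the claim that no $d_i$ is an eigenvalue of $C$ follows by clearing denominators, i.e., $\det(\lambda I-C)=\prod_j(\lambda-d_j)-\rho\sum_i z_i^2\prod_{j\ne i}(\lambda-d_j)$, which at $\lambda=d_i$ equals $-\rho z_i^2\prod_{j\ne i}(d_i-d_j)\ne 0$ precisely when $z_i\ne 0$ and the $d_j$ are distinct; this also justifies passing from ``zeros of $f$'' to ``all eigenvalues of $C$'' in the counting step. The closing remark about degenerate cases is not needed for the statement as given (the strict claim is only asserted under the nondegeneracy hypotheses), but the continuity argument you sketch is a standard and valid way to recover the inclusive inequalities, redundantly with your minimax argument.
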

We are in a position to state the first main result, namely, the stability of the AIMD scheduling policy.
\begin{theorem}\label{thm:PhiStability}
Let vectors $\alpha = (\alpha_{1},\;\ldots,\;\alpha_{n})$, $\beta = (\beta_{1},\;\ldots,\;\beta_{n})$, where $0\leq \alpha_{i}\leq 1$, $0 < \beta_{i} < 1$, $\forall i = 1,\;\ldots,\;n$, and $\mathbf{1}'\alpha = 1$, with $\mathbf{1}= (1,\;\ldots,\;1)$. Let also $B = \diag(\beta_{1},\ldots,\beta_{n})$. Then,
\begin{equation}
    \Phi = B - 2\alpha \beta',
\end{equation}
is a Schur matrix.
\end{theorem}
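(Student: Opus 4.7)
My plan is to exploit that $\Phi$, though not itself symmetric, is similar to a \emph{symmetric} rank-one update of a diagonal matrix, to which Theorem~\ref{thm:rankOnePerturb} can be applied. First, without loss of generality I assume $\alpha_i>0$ for every $i$: any index with $\alpha_i=0$ contributes a zero row to $\alpha\beta'$, so after a permutation sending such indices to the top, $\Phi$ becomes block lower triangular with the diagonal block $\diag(\beta_i)_{i:\alpha_i=0}$ (whose eigenvalues already lie in $(0,1)$) and a reduced block of the same form on $\{i:\alpha_i>0\}$ that still satisfies $\mathbf{1}'\alpha=1$. Setting $S=\diag\bigl(\sqrt{\alpha_i/\beta_i}\bigr)$ and $z=(\sqrt{\alpha_i\beta_i})_i$, a direct check yields $S^{-1}\alpha=S\beta=z$, and since $S$ commutes with $B$,
\begin{equation*}
S^{-1}\Phi S \;=\; B-2zz' \;=:\; \tilde{\Phi},
\end{equation*}
a symmetric matrix with $\sigma(\tilde{\Phi})=\sigma(\Phi)\subset\R$. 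It therefore suffices to prove $\sigma(\tilde{\Phi})\subset(-1,1)$.

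For the upper side, $\tilde{\Phi}=B+\rho zz'$ with $\rho=-2<0$ fits exactly the hypothesis of Theorem~\ref{thm:rankOnePerturb}(ii). Writing the sorted eigenvalues of $\tilde{\Phi}$ as $c_1\leq\cdots\leq c_n$ and the sorted diagonal entries of $B$ as $\beta_{(1)}\leq\cdots\leq\beta_{(n)}$, the interlacing conclusion gives in particular $c_n\leq\beta_{(n)}<1$, so every eigenvalue automatically lies below $1$.

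The main obstacle is the lower bound $c_1>-1$, on which the interlacing statement is silent. My plan is to establish the stronger property $\tilde{\Phi}+I\succ 0$. Since $B+I$ is positive definite (its diagonal entries $1+\beta_i$ exceed $1$), a congruence by $(B+I)^{-1/2}$ reduces $(B+I)-2zz'\succ 0$ to $I-2(B+I)^{-1/2}zz'(B+I)^{-1/2}\succ 0$. The subtracted matrix is rank one and therefore has unique nonzero eigenvalue $2z'(B+I)^{-1}z$, so the condition boils down to
\begin{equation*}
2\,z'(B+I)^{-1}z \;=\; 2\sum_{i=1}^{n}\frac{\alpha_i\beta_i}{1+\beta_i} \;<\; 1.
\end{equation*}
This is immediate from $\beta_i/(1+\beta_i)<1/2$ (since $\beta_i<1$) combined with the normalisation $\sum_i\alpha_i=1$: bounding each summand gives $2\sum_i\alpha_i\beta_i/(1+\beta_i)<\sum_i\alpha_i=1$. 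Hence $c_1>-1$, and together with the previous step this places every eigenvalue of $\Phi$ strictly inside the unit circle, proving $\Phi$ Schur. The essential idea of the lower bound is that the normalisation $\mathbf{1}'\alpha=1$ just barely compensates for the rank-one backoff $2\alpha\beta'$ once every $\beta_i<1$; relax either hypothesis and $c_1$ can escape below $-1$.
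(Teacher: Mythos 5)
Your proof is correct, and while the first half coincides with the paper's argument, the crucial lower bound is obtained by a genuinely different route. Like the paper, you symmetrize $\Phi$ via the diagonal similarity $S=\diag(\sqrt{\alpha_i/\beta_i})$ to reach $\tilde{\Phi}=B-2zz'$ with $z=(\sqrt{\alpha_i\beta_i})_i$, and you invoke Theorem~\ref{thm:rankOnePerturb}(ii) to get $c_n\leq\beta_{(n)}<1$. The paper then closes the argument with a determinant identity: writing $\Phi=(I-2\alpha\mathbf{1}')B$ gives $\phi_1\cdots\phi_n=-\beta_1\cdots\beta_n$, which combined with the interlacing bounds $\phi_{i+1}\geq\beta_{(i)}>0$ forces $|\phi_1|\leq\beta_{(n)}<1$. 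You instead prove $\tilde{\Phi}+I\succ 0$ directly, reducing it by congruence to the scalar condition $2z'(B+I)^{-1}z=2\sum_i\alpha_i\beta_i/(1+\beta_i)<1$, which follows from $\beta_i/(1+\beta_i)<1/2$ and $\sum_i\alpha_i=1$. Your version has two advantages: it makes explicit why the threshold is exactly $-1$ and which hypotheses ($\beta_i<1$ and the normalisation of $\alpha$) are doing the work, and it does not rely on interlacing at all for the lower bound (interlacing is only needed for the upper side). You are also more careful than the paper on a small but real gap: the hypothesis allows $\alpha_i=0$, in which case $S$ (the paper's $\hat{A}^{-1/2}$) is undefined and the ``all elements of $z$ are nonzero'' clause fails; your block-triangular reduction to the support of $\alpha$ fixes this cleanly. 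The paper's determinant identity is a nice structural fact you do not recover, but it is not needed for the stability conclusion.
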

\begin{proof}
Matrix $\Phi$ can also be written as 
\begin{equation}
    \Phi = (I-2A)B,
\end{equation}
where $A = \alpha \mathbf{1}'$ is a rank-one matrix with $\sigma(A) = \{\mathbf{1}'\alpha, 0 ,\ldots, 0\}$, and $\mathbf{1}'\alpha = 1$ by definition. In the sequel, we denote by $\sigma(\Phi) = \{\phi_{1},\ldots,\phi_{n}\}$ the spectrum of $\Phi$. Clearly, $\sigma(B) = \{\beta_{1},\ldots,\beta_{n}\}$. Also, it is easy to show that $\sigma(I-2A) =\{-1,\;1,\ldots,1\}$. We can also write that $\det(\Phi) = \phi_{1}\phi_{2}\ldots\phi_{n}$, and $\det(\Phi) = \det(B)\det(I-2A)$. Thus,
\begin{equation}\label{eq:phi12nbeta12n}
    \phi_{1}\phi_{2}\cdots\phi_{n} = -\beta_{1}\beta_{2}\cdots\beta_{n}.
\end{equation}
Let now $\hat{A} = \diag(\alpha_{1},\ldots, \alpha_{n})$, and
\begin{equation}
    \hat{\Phi} = B^{\frac{1}{2}}\hat{A}^{-\frac{1}{2}}\Phi\hat{A}^{\frac{1}{2}}B^{-\frac{1}{2}}.
\end{equation}
Clearly, matrices $\Phi$ and $\hat{\Phi}$ are similar, and therefore have identical eigenvalues. Note also that $\hat{\Phi}$ can be written as
\begin{equation}
    \hat{\Phi} = B - 2 zz',
\end{equation}
which is clearly a symmetric matrix, where 
\begin{equation}
    z = \begin{bmatrix}
     \sqrt{\alpha_{1}\beta_{1}} & \sqrt{\alpha_{2}\beta_{2}} & \cdots & \sqrt{\alpha_{n}\beta_{n}}
    \end{bmatrix}.
\end{equation}

Without loss of generality, let $b_{1}\leq \ldots \leq b_{n}$, and $\phi_{1}\leq \ldots \leq \phi_{n}$. Then,  from Theorem \ref{thm:rankOnePerturb}, and since all elements of $z$ are nonzero, we may write that
\begin{equation}\label{eq:phileqbeta}
    \phi_{1} \leq \beta_{1} \leq \phi_{2} \leq \beta_{2} \leq \ldots \leq \phi_{n} \leq \beta_{n}.
\end{equation}
From \eqref{eq:phi12nbeta12n} and \eqref{eq:phileqbeta}, we can conclude that $0<\phi_{2},\;\phi_{3},\;\dots,\;\phi_{n}<1$, and $\phi_{1}$ is a negative real number. From \eqref{eq:phileqbeta}, we have that
\begin{equation}\label{eq:phi2phin}
    \phi_{2}\phi_{3}\cdots\phi_{n} \geq \beta_{1}\beta_{2}\cdots\beta_{n-1}.
\end{equation}
However, due to \eqref{eq:phi12nbeta12n}, \eqref{eq:phi2phin} implies that 
\begin{equation}
    |\phi_{1}| \leq \beta_{n},
\end{equation}
i.e., $-\beta_{n}\leq \phi_{1}<0$. Thus, all the eigenvalues of $\hat{\Phi}$ (hence $\Phi$), strictly lie inside the unit circle (specifically on the real axis between $-1$ and $1$). This proves the theorem.
\end{proof}

The main deductions that follow from the analysis presented in this section are as follows:
\begin{itemize}
    \item[1)] AIMD parameters $\alpha_{i}$, $\beta_{i}$, $i=1,\ldots,n$, can be locally selected at each individual node. Thus, system \eqref{eq:u_ik+1} represents a decentralized admission control policy.
    \item[2)] In view of Theorem \ref{thm:PhiStability}, the aggregate system \eqref{eq:U(k+1)} is stable regardless of the choice of AIMD parameters.
    \item[3)] Since $\Phi$ in \eqref{eq:U(k+1)} is a Schur matrix, the $i$th AIMD admission rate converges to 
    \begin{equation}
        u_{i}^{\ast} = \frac{\alpha_{i}}{1-\beta_{i}}T^{\ast},
    \end{equation}
    where $T^{\ast} = \sum_{j=1}^{n}(\frac{\alpha_{i}}{2}\frac{1+\beta_{i}}{1-\beta_{i}})^{-1}\lambda$.
\end{itemize}

\section{Resource Allocation Control}\label{section:resourceAllocation}
Resource allocation in a queueing system pertains to a strategy ensuring that computing nodes provide incoming requests with adequate resources so that the number of queued requests is not increasing indefinitely as more requests are added to the system. Stabilizing the overall system, minimizing queueing and idle times, providing a trade-off between server utilization and application performance, and maximizing system throughput and output are essential objectives of resource allocation strategies in queueing systems. Here, we focus on stability as a fundamental qualitative property, which if not present, may make it impossible for a queueing scheme to achieve any other desirable objective. 
%
%


We follow a bottom-up approach for designing a decentralized resource allocation control strategy as follows. Let $(\alpha_{i},\;\beta_{i})$, $\gamma_{i}$ denote the AIMD parameters, and the service rate, respectively, associated with the $i$th node. Recall that, for $\tau\in[0,\;T(k)]$,
\begin{equation}
    u_{i}(\tau) = \beta_{i}u_{i}(k) + \alpha_{i}\tau, 
\end{equation}
is the rate at which requests are admitted to the $i$th node, while
\begin{equation}
    w_{i}(\tau) = w_{i}(k) + \beta_{i}u_{i}(k)\tau + \frac{\alpha_{i}}{2}\tau^{2} - \gamma_{i}(k) \tau,
\end{equation}
is the number of queued requests waiting in the $i$th node, during the $k$th cycle, respectively. We define by
\begin{align}
    y_{i}^{k}(\tau) &= w_{i}(\tau) + \gamma_{i}(k)\tau,\\  z_{i}^{k}(\tau) &= \gamma_{i}(k)\tau,
\end{align}
the total number of requests that have been admitted by time $\tau$, and the number of requests that can be served at most by time $\tau$, respectively. Let also $\hat{\gamma}_{i}(k)$ be the slope of a line segment starting from the origin tangent to parabola $y_{i}^{k}(\tau)$ (see $\mathbf{OA}$ in Fig. \ref{fig:tangent}). By letting $\gamma_{i}(k) = \hat{\gamma}_{i}(k)$, thus selecting $z_{i}^{k}$ as the line tangent to $y_{i}^{k}(\tau)$ at point $t_{z}^{k}\in[0,\;T(k)]$, as shown in Fig. \ref{fig:tangent}, we effectively guarantee that the maximum number of requests that can be served, during the $k$th cycle, never exceeds the actual number of admitted requests, avoiding, thus, node under-utilization. In Theorem \ref{thm:wi(k+1)} below, we also show that this resource allocation choice is stabilizing. Note also that if $\gamma_{i}(k)>\hat{\gamma}_{i}(k)$ (see red dashed line in Fig. \ref{fig:tangent}) there is always a nonzero interval that the queue of the $i$th node remains empty, i.e., resources are over-provisioned. Similarly, by letting $0<\gamma_{i}(k)<\hat{\gamma}_{i}(k)$ (see blue dashed line in Fig. \ref{fig:tangent}) there is no stability guarantee that the $i$th queue remains bounded. 

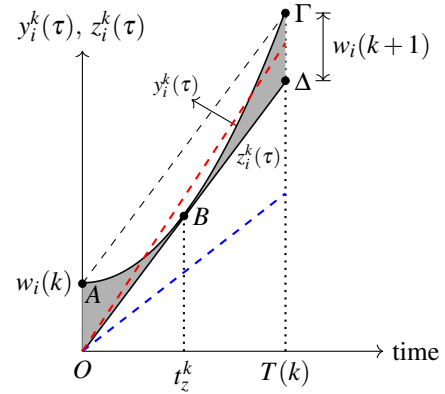
\begin{figure}[ht]
    \centering
\begin{tikzpicture}
  \draw[->] (0, 0) -- (4, 0) node[right] {time};
  \draw[->] (0, 0) -- (0, 4) node[above] {$y_{i}^{k}(\tau),\;
  z_{i}^{k}(\tau)$};
  \path (0, 0) node[below] {$O$};
  \path (0, .9) node[left] {$w_{i}(k)$};
  
  \path (2.7, 0) node[below] {$T(k)$};
  \draw[thick, dotted] (1.35, 1.8) -- (1.35, 0);
  \path (1.35, 0) node[below] {$t_{z}^{k}$};
  \draw[very thick, scale=1, domain=0:1.8, smooth, variable=\x, black, name path = A] plot ({1.5*\x}, {.9+ .2*(5.5556*\x*\x)});
  \draw[very thick, scale=1, domain=0:1.8, smooth, variable=\y, black, name path = B]  plot ({1.5*\y}, {.2*(10*\y)});
  \draw[thick, dotted] (2.7,0) -- (2.7,.9+ 3.6);
\tikzfillbetween[of=A and B]{black!30!white}; 
     \draw[fill] (0, .9) circle [radius=0.05cm];
     \draw[fill] (1.35, 1.8) circle [radius=0.05cm] node[right, yshift = .cm] {$B$};
    \draw[fill] (2.7,.9+ 3.6) circle [radius=0.05cm] node[right] {$\Gamma$};
    \draw[fill] (2.7,.0+ 3.6) circle [radius=0.05cm] node[right] {$\Delta$};
    \draw (3.1, .9+3.6) -- (3.3, .9+3.6);
    \draw (3.1, .+3.6) -- (3.3, .+3.6);
    \draw[->] (3.2, .9+3.6) -- (3.2, .+3.6);
    \draw[<-] (3.2, .9+3.6) -- (3.2, .9+3.6);
    \path (3.2, .45+3.6) node[right] {$w_{i}(k+1)$};
    \path (1.25, 3.25) node[above] {\footnotesize$y_{i}^{k}(\tau)$};
    \draw[->] (2.05, 3) -- (1.45, 3.35);
    \path (2.35, 2.3) node[above] {\footnotesize$z_{i}^{k}(\tau)$};
    \draw[dashed, name path = C] (0, .9) -- (2.7,.9+ 3.6);
    \path (-.08, 1.) node[anchor = north west] {$A$};
    \draw[thick, dashed,blue] (0,0) -- (2.7,-1.5+ 3.6);
    \draw[thick, dashed,red] (0,0) -- (2.7,.5+ 3.6);
\end{tikzpicture}
\caption{Arrivals and departures in the $i$th node during the $k$th cycle.}
    \label{fig:tangent}
\end{figure}

We now show how to obtain a closed formula for $\hat{\gamma}_{i}(k)$. We first find the intersection point $B$, as shown in Fig. \ref{fig:tangent}, where
\begin{align}
    y_{i}^{k}(t_{z}^{k}) & = z_{i}^{k}(t_{z}^{k}), \label{eq:functions_eq} \\ 
    \diff{y_{i}^{k}}{t_{z}^{k}} & = \diff{z_{i}^{k}}{t_{z}^{k}}. \label{eq:deriv_eq}
\end{align}
From \eqref{eq:deriv_eq}, we get 
\begin{equation}\label{eq:tzk1}
    t_{z}^{k} = \frac{\hat{\gamma}_{i}(k) - \beta_{i}u_{i}(k)}{\alpha_{i}},
\end{equation}
while, after a few calculations, using \eqref{eq:tzk1} in \eqref{eq:functions_eq}, we have
\begin{equation}\label{eq:hatgammaik}
    \hat{\gamma}_{i}(k) = \beta_{i}u_{i}(k) + \sqrt{2\alpha_{i}w_{i}(k)},
\end{equation}
which is a nonlinear, discrete-time state-feedback controller. Finally, using \eqref{eq:hatgammaik}, \eqref{eq:tzk1} becomes
\begin{equation}
    t_{z}^{k} = \sqrt{\frac{2w_{i}(k)}{\alpha_{i}}}.
\end{equation}
We are now in a position to state the second main result of our work, namely, the proposed resource allocation strategy along with its stability properties.

\begin{theorem}\label{thm:wi(k+1)}
Let
\begin{equation}\label{eq:wi(k+1)theorem}
    w_{i}(k+1) = w_{i}(k) +(\beta_{i}u_{i}(k) + \frac{\alpha_{i}}{2}T(k) - \gamma_{i}(k))T(k),
\end{equation}
with $ w_{i}(0) \geq 0$, be the queue dynamics of the $i$th node, where $\alpha_{i}$, $\beta_{i}$ are the AIMD parameters, $T(k)$ is the cycle period, and 
\begin{equation}\label{eq:feedback_gamma_i}
    \gamma_{i}(k) = \beta_{i}u_{i}(k) + \sqrt{2\alpha_{i}w_{i}(k)},
\end{equation}
is a feedback resource allocation policy. Then, the following are true.
\begin{itemize}
    \item[i.)] System \eqref{eq:wi(k+1)theorem}-\eqref{eq:feedback_gamma_i} is nonnegative for all $w_{i}(k)\geq 0$.
    \item[ii.)] The set $\mathcal{W}_i(k)=[0,\; \frac{a_i}{2}T(k)^2]$ is invariant with respect to system \eqref{eq:wi(k+1)theorem}-\eqref{eq:feedback_gamma_i}.
    \item[iii.)] For $w_{i,0}\notin\mathcal{W}_{i}(k)$, there is an integer $k^\star>0$ such that $w_{i}(k^\star) \in \mathcal{W}_{i}(k)$.
\end{itemize}
\end{theorem}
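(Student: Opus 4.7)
The plan is to collapse the closed-loop queue update into a single algebraic identity and then read off the three claims from it. Substituting the feedback policy~\eqref{eq:feedback_gamma_i} into~\eqref{eq:wi(k+1)theorem}, the $\beta_{i}u_{i}(k)$ terms cancel, and setting $q_{k}:=T(k)\sqrt{\alpha_{i}/2}$, so that $\tfrac{\alpha_{i}}{2}T(k)^{2}=q_{k}^{2}$ and $T(k)\sqrt{2\alpha_{i}w_{i}(k)} = 2q_{k}\sqrt{w_{i}(k)}$, the recursion becomes
\begin{equation*}
    w_{i}(k+1) = w_{i}(k) + q_{k}^{2} - 2 q_{k}\sqrt{w_{i}(k)} = \bigl(\sqrt{w_{i}(k)}-q_{k}\bigr)^{2},
\end{equation*}
or equivalently $\sqrt{w_{i}(k+1)} = \bigl|\sqrt{w_{i}(k)}-q_{k}\bigr|$. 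This identity is the backbone of the rest of the argument.

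Part~(i) is immediate: the right-hand side is a square, so $w_{i}(k+1)\ge 0$ whenever $w_{i}(k)\ge 0$, and nonnegativity propagates by induction. For part~(ii), if $w_{i}(k)\in[0,q_{k}^{2}]$ then $\sqrt{w_{i}(k)}\in[0,q_{k}]$, hence $\sqrt{w_{i}(k)}-q_{k}\in[-q_{k},0]$ and therefore $w_{i}(k+1)=(\sqrt{w_{i}(k)}-q_{k})^{2}\in[0,q_{k}^{2}]$, which is exactly the bound defining $\mathcal{W}_{i}(k)$.

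For part~(iii), assume $\sqrt{w_{i,0}}>q_{0}$. As long as the iterate lies outside the target interval, the absolute value drops out and the identity reduces to $\sqrt{w_{i}(k+1)} = \sqrt{w_{i}(k)} - q_{k}$, that is, a strict linear decrease of $\sqrt{w_{i}(k)}$ by $q_{k}$ per step. Since $T(k)\to T^{\ast}>0$ by Theorem~\ref{thm:PhiStability}, the infimum $\underline{q}:=\inf_{k}q_{k}$ is strictly positive, so the iterate enters $\mathcal{W}_{i}(k_{i}^{\ast})$ within $k_{i}^{\ast}\le \lceil \sqrt{w_{i,0}}/\underline{q}\rceil$ steps, and by part~(ii) it remains there. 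Taking $k^{\ast}=\max_{i}k_{i}^{\ast}$ yields the common finite entry time for all nodes.

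The main obstacle I foresee is that $\mathcal{W}_{i}(k)$ is itself time-varying through $T(k)$: part~(ii) shows $w_{i}(k+1)\le q_{k}^{2}$, which lies inside the next-step set $\mathcal{W}_{i}(k+1)=[0,q_{k+1}^{2}]$ only when $q_{k+1}\ge q_{k}$, something not guaranteed during the transient phase of the admission loop. To handle this cleanly I would first use Theorem~\ref{thm:PhiStability} to bound $q_{k}$ uniformly away from zero and from above for all $k$, then prove invariance with respect to the enlarged envelope $[0,\tfrac{\alpha_{i}}{2}\sup_{j\ge k}T(j)^{2}]$, and finally specialize to the asymptotic fixed point $T^{\ast}$ where $[0,\tfrac{\alpha_{i}}{2}(T^{\ast})^{2}]$ is truly forward invariant.
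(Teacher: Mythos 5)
Your proof is correct and takes a genuinely different, and in fact cleaner, route than the paper's. The paper substitutes the feedback law and then analyses $f(w_{i})=w_{i}+\tfrac{\alpha_{i}}{2}T(k)^{2}-\sqrt{2\alpha_{i}w_{i}}\,T(k)$ by convexity: it locates the minimizer $w_{i}^{*}=\tfrac{\alpha_{i}}{2}T(k)^{2}$ with $f(w_{i}^{*})=0$ to get nonnegativity, uses the monotonicity of $f$ on $[0,w_{i}^{*}]$ together with $f(0)=\tfrac{\alpha_{i}}{2}T(k)^{2}$ for invariance, and for part (iii) argues (rather incompletely) via the sign of $f'$ and a case split around $w_{i}(k)=2\alpha_{i}T(k)^{2}$. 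Your observation that the substitution yields the perfect square $w_{i}(k+1)=\bigl(\sqrt{w_{i}(k)}-q_{k}\bigr)^{2}$ with $q_{k}=T(k)\sqrt{\alpha_{i}/2}$ subsumes all of this: parts (i) and (ii) become one-line consequences, and for part (iii) the linear decrease $\sqrt{w_{i}(k+1)}=\sqrt{w_{i}(k)}-q_{k}$ outside the target set, combined with $\inf_{k}q_{k}>0$ (from $T(k)>0$ and $T(k)\to T^{\ast}>0$), gives an explicit bound $k_{i}^{\star}\le\lceil\sqrt{w_{i,0}}/\underline{q}\rceil$ on the entry time that the paper's argument only gestures at; taking $k^{\star}=\max_{i}k_{i}^{\star}$ is exactly the missing uniformization the authors themselves flag. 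The caveat you raise about $\mathcal{W}_{i}(k)$ being time-varying is real and applies equally to the paper's proof, which likewise only establishes $w_{i}(k+1)\in[0,\tfrac{\alpha_{i}}{2}T(k)^{2}]$ rather than membership in $\mathcal{W}_{i}(k+1)$; your proposed fix via a uniform envelope $[0,\tfrac{\alpha_{i}}{2}\sup_{j\ge k}T(j)^{2}]$ is a legitimate way to close a gap the paper leaves open, so on this point your treatment is strictly more careful than the original.
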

\begin{proof}
i.) In the proof, we denote $\diff{\phi}{x}$ by $\phi'(x)$. Substituting \eqref{eq:feedback_gamma_i} in \eqref{eq:wi(k+1)theorem}, we write
\begin{equation}
    w_{i}(k+1) = w_{i}(k) +\frac{\alpha_{i}}{2}T(k)^{2} - \sqrt{2\alpha_{i}w_{i}(k)}T(k),
\end{equation}
and we show that $f(w_{i}(k)) = w_{i}(k+1)$ is convex in $w_{i}(k)$. Indeed, $f(w_{i}(k))$ is convex with respect to $w_i(k)$ since it is a sum of the affine function $w_{i}(k) + \frac{\alpha_{i}}{2}T(k)^{2}$ and the convex function $- \sqrt{2\alpha_{i}w_{i}(k)}T(k)$. Note also that $f(w_{i}(k))$ is convex for all $T(k)\geq 0$. Since, $f(w_{i}(k))$ is continuously differentiable and convex for $w_{i}(k)\geq 0$, the unique minimizer is attained by setting $f'(w_{i}(k)) = 0$, which results in $1-\frac{\sqrt{2\alpha_{i}}T(k)}{2\sqrt{w_{i}(k)}} = 0$, or
\begin{equation*}
w_{i}^{*}(k) = \frac{\alpha_{i}T(k)^{2}}{2}
\end{equation*}
Taking into account that $f(w_{i}^{*}(k)) = 0$, it holds that $f(w_{i}(k))\geq 0$ $\forall\; w_{i}(k)\geq 0$, $T(k)\geq 0$.

ii.) The condition $w_{i}(k+1)\leq w_{i}(k)$ holds when $ w_{i}(k) + \frac{\alpha_{i}}{2}T(k)^{2} - \sqrt{2\alpha_{i}w_{i}(k)}T(k) \leq w_{i}(k)$, or when
\begin{equation*}
w_{i}(k) \geq \frac{\alpha_{i}}{8}T(k)^{2}.    
\end{equation*}
 Let $\hat{\mathcal{W}}_{i}(k) =\left\{w\in\R: w\geq \frac{\alpha_{i}}{8}T(k)^{2}\right\}$. Since $\hat{\mathcal{W}}_{i}(k)\cap \mathcal{W}_{i}(k) =  [\frac{\alpha_{i}}{8}T(k)^{2},\;\frac{\alpha_{i}}{2}T(k)^{2}]$, we need only to verify $w_{i}(k+1) \leq w_{i}(k)$ $\forall w_{i}(k)\in [0,\; \frac{\alpha_{i}}{8}T(k)^{2})$. Since  $f(w_{i}(k))$ is convex with minimum at $\frac{\alpha_{i}}{2}T(k)^{2}$ it follows that $f(0)\geq f(w_{i}(k))$ for any $ w_{i}(k)\in [0,\;\frac{\alpha_{i}}{2}T(k)^{2}]$. Since $f(0) = \frac{\alpha_{i}}{2}T(k)^{2}$, it holds that $0\leq f(w_{i}(k))\leq \frac{\alpha_{i}}{2}T(k)^{2}$ for all $ w_{i}(k)\in \mathcal{W}_{i}(k)$. This proves part $ii.)$.

iii.) Consider function $g(w_{i}(k)) = w_{i}(k) - f(w_{i}(k))$. Then, $g'(w_{i}(k)) = 1 - f'(w_{i}(k)) = \frac{\sqrt{2\alpha_{i}}T(k)}{2\sqrt{w_{i}(k)}}>0$ since $T(k)>0$ for all $k\geq 0$. Moreover, $g(\frac{\alpha_{i}}{2}T(k)^{2}) = \frac{\alpha_{i}}{2}T(k)^{2}$. Thus, $\forall k>0$, and $\forall w_{i}(k)\geq \frac{\alpha_{i}}{2}T(k)^{2}$, we have $g(w_{i}(k))\geq \frac{\alpha_{i}}{2}T(k)^{2}$, or $f(w_{i}(k))\leq w_{i}(k) - \frac{\alpha_{i}}{2}T(k)$. We now claim that for any $w_{i}(0) \geq \frac{\alpha_{i}}{2}T(0)^{2}$, $\exists$ $k_{i}^{\ast}$ such that $w_{i}(k_{i}^{\ast}) \leq \frac{\alpha_{i}}{2}T(k_{i}^{\ast})^{2}$. Indeed, $w_{i}(k_{i}^{\ast}) \leq w_{i}(0) - \sum_{j=0}^{k_{i}^{\ast}-1}\frac{\alpha_{i}}{2}T(j)$. To enforce the claim, we have $w_{i}(0) - \sum_{j=0}^{k_{i}^{\ast}-1}\frac{\alpha_{i}}{2}T(j) \leq \frac{\alpha_{i}}{2}T(k_{i}^{\ast})^{2}$, or $w_{i}(0) - \sum_{j=0}^{k_{i}^{\ast}-1}\frac{\alpha_{i}}{2}T(j) \leq w_{i}(0) - (\frac{\alpha_{i}}{2}\min_{j=0,\ldots,k_{i}^{\ast}-1}T(j)^{2})k_{i}^{\ast}\leq \frac{\alpha_{i}}{2}T(k_{i}^{\ast})^{2}$, thus
\begin{equation*}
    k_{i}^{\ast}\geq \Big\lceil\frac{w_{i}(0)-\frac{\alpha_{i}}{2}T(k_{i}^{\ast})}{\frac{\alpha_{i}}{2}\min_{j=0,\ldots,k_{i}^{\ast}-1}T(j)^{2}}\Big\rceil,
\end{equation*}
which can always be found. 

\end{proof}

It is worth highlighting some appealing characteristics of the proposed resource allocation scheme:
\begin{itemize}
    \item[1)] The closed-loop system \eqref{eq:wi(k+1)theorem}-\eqref{eq:feedback_gamma_i} under the resource allocation policy $\gamma_{i}(k) = \hat{\gamma}_{i}(k)$ is globally attracted to the sets $\mathcal{W}_i(k)=[0,\; \frac{a_i}{2}T(k)^2]$ from Theorem 3 in finite time.
    \item[2)] Implementation of \eqref{eq:feedback_gamma_i} is decentralised for any $i=1,..,n$ as only local information is required.
    \item[3)] The stability properties of $\gamma_{i}(k) = \hat{\gamma}_{i}(k)$ are independent of the particular tuning of the AIMD parameters $\alpha_{i}$, $\beta_{i}$, $i=1,\ldots,n$.
    \item[4)] The proposed resource allocation strategy \eqref{eq:feedback_gamma_i} is scalable irrespective of the total number of individual nodes.
\end{itemize}



\section{Queuing Time Calculation}\label{section:queueingTime}

We define the total queueing time associated with requests dispatched to the $i$th node as
\begin{equation}
    T_{i}(k) = T_{\delta_{i}}(k) + T_{w_{i}}(k),
\end{equation}
where $T_{\delta_{i}}(k)$ corresponds to queueing time in the batch queue, while $T_{w_{i}}(k)$ corresponds to queueing time in the $i$th node. We also define the following. The average admission rate associated with the $i$th node is defined as
\begin{equation}
    u_{i}^{\mathbf{av}}(k) = \frac{1}{T(k)}\int_{t_{k}}^{t_{k+1}} u_{i}(t)  \mathbf{d}t,
\end{equation}
where $T(k) = t_{k+1} - t_{k}$, and $u_{i}(t)$ is given by \eqref{eq:u_i(t)}. Solving the integral above yields
\begin{equation}\label{eq:uiav(k)}
    u_{i}^{\mathbf{av}}(k) = \beta_{i}u_{i}(k) + \frac{\alpha_{i}}{2} T(k).
\end{equation}
 In view of \eqref{eq:lambda} and \eqref{eq:uiav(k)}, we may write that
\begin{equation}\label{eq:AIMDcondition}
    \sum_{i=1}^{n} u_{i}^{\mathbf{av}}(k) = \lambda, \; \forall \; k\geq 0.
\end{equation}
 In view of \eqref{eq:AIMDcondition}, we can write that $u_{i}^{\mathbf{av}}(k)T(k)$ corresponds to the fraction of the total arrivals at the batch queue (namely, $\sum_{i=1}^{n}u_{i}^{\mathbf{av}}(k)T(k) = \lambda T(k)$) associated with the $i$th node. Using definition \eqref{eq:queueingTime}, and letting $\delta_{i}(\tau) = u_{i}^{\mathbf{av}}(k)\tau - \beta_{i}u_{i}(k)\tau - \frac{\alpha_{i}}{2}\tau^{2}$, with $0 \leq \tau \leq T(k)$, be the number of queued requests waiting in the batch queue before being dispatched to the $i$th node, we may write that $T_{\delta_{i}}(k) = \frac{\int_{0}^{T(k)}\delta_{i}(\tau)\mathbf{d}\tau}{u_{i}^{\mathbf{av}}(k)T(k)}$. Interestingly enough, integral $\int_{0}^{T(k)}\delta_{i}(\tau)\mathbf{d}\tau$ is identically equal to the unshaded area $AB\Gamma A$, in Fig. \ref{fig:tangent}. Similarly, using definition \eqref{eq:queueingTime}, we may write that $T_{w_{i}}(k) = \frac{\int_{0}^{T(k)}w_{i}(\tau)\mathbf{d}\tau}{\beta_{i}u_{i}(k)T(k) + \frac{1}{2}\alpha_{i}T(k)^{2}}$, where integral $\int_{0}^{T(k)}w_{i}(\tau)\mathbf{d}\tau$ is identically equal to the shaded area in Fig. \ref{fig:tangent}, and $\beta_{i}u_{i}(k)T(k) + \frac{1}{2}\alpha_{i}T(k)^{2} = u_{i}^{\mathbf{av}}(k)T(k)$ due to \eqref{eq:uiav(k)}. Adding the two aforementioned areas and dividing by $u_{i}^{\mathbf{av}}(k)T(k)$ clearly yields the queueing time associated with the $i$th node. In other words, $T_{i}(k)$ is equal to the area of the trapezium $A\Gamma\Delta O$ divided by $u_{i}^{\mathbf{av}}(k)T(k)$, i.e., $ T_{i}^{\mathbf{tot}}(k) = \frac{(w_{i}(k)+w_{i}(k+1))\frac{T(k)}{2}}{u_{i}^{\mathbf{av}}(k)T(k)}$
or
\begin{equation}\label{eq:T_i(k)}
    T_{i}(k) = \frac{w_{i}(k)+w_{i}(k+1)}{2u_{i}^{\mathbf{av}}(k)}.
\end{equation}

\begin{remark}
The total queueing time of node $i$ can be defined by means of local information without any information pertinent to the batch queue. Also, by defining $w_{i}^{\mathbf{av}}(k) = (w_{i}(k)+w_{i}(k+1))/2$ as the average number of queued requests over the $k$th cycle, \eqref{eq:T_i(k)} becomes
\begin{equation}
    T_{i}(k) = \frac{w_{i}^{\mathbf{av}}(k)}{u_{i}^{\mathbf{av}}(k)},
\end{equation}
which is clearly consistent with \textit{Little's Law}.
\end{remark}

\section{Numerical Example}\label{section:example}
We consider a flow of requests with constant flow rate, $\lambda$ [req/sec], entering a system of four computing nodes. The $i$th node is associated with a FCFS queue the length of which is denoted by $w_{i}$, $i=1,\ldots,4$. By tuning parameters $\alpha_{i}$, $\beta_{i}$ independently, the $i$th node admits requests according to AIMD control policy $u_{i}$ given in \eqref{eq:u_ik+1}, and alters its service rate via discrete nonlinear feedback controller $\gamma_{i}$ defined in \eqref{eq:feedback_gamma_i}. Simulation results are presented in Fig. \ref{fig:1}-\ref{fig:6} for the setup parameters shown in Table \ref{table:1}. 

\begin{table}[h]
\caption{Simulation parameters}
\label{table:1}
\begin{center}
\begin{tabular}{|c||c||c||c||c||c|}
\hline
$\lambda$ & $\alpha_{i}$ & $\beta_{i}$ & $u_{i}(0)$ & $w_{i}(0)$ & $i$\\
\hline
100 & $5i$  & $0.5$  & $(i-1)5$ & $(2i-1) 7.5$ & $\{1,2,3,4\}$ \\
\hline
\end{tabular}
\end{center}
\end{table}
\begin{table}[h]
\caption{Invariant sets for $k\geq 15$}
\label{table:2}
\begin{center}
\begin{tabular}{|c||c||c||c|}
\hline
$\mathcal{W}_{1}(k)$ & $\mathcal{W}_{2}(k)$ & $\mathcal{W}_{3}(k)$ & $\mathcal{W}_{4}(k)$\\
\hline
$[0,\;4.44]$ & $[0,\;8.88]$  & $[0,\;13.33]$  & $[0,\;17.77]$ \\
\hline
\end{tabular}
\end{center}
\end{table}
\begin{figure}
    \centering
    \scalebox{.4}{
    \includegraphics{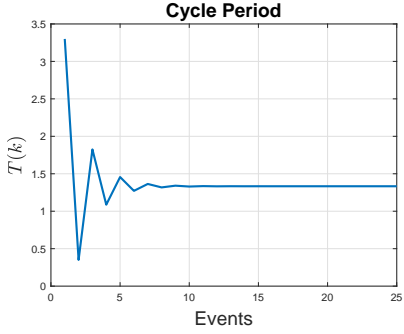}}
    \caption{Cycle Period}
    \label{fig:1}
\end{figure}
\begin{figure}
    \centering
    \scalebox{.4}{
    \includegraphics{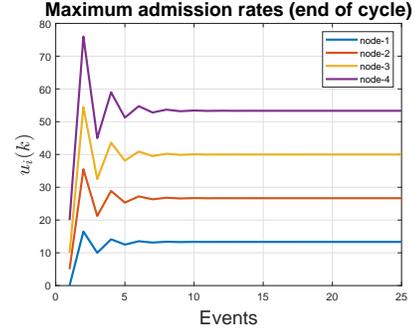}}
    \caption{Maximum admission rates before each clearance event.}
    \label{fig:2}
\end{figure}
\begin{figure}
    \centering
    \scalebox{.4}{
    \includegraphics{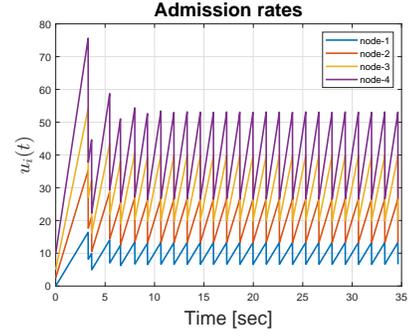}}
    \caption{AIMD admission control.}
    \label{fig:3}
\end{figure}
\begin{figure}
    \centering
    \scalebox{.4}{
    \includegraphics{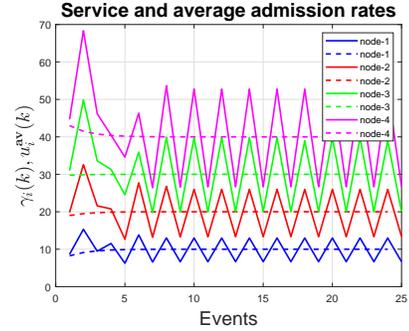}}
    \caption{Resource allocation and average admission control. Solid lines: $\gamma_{i}(k)$, dashed lines: $u_{i}^{\mathbf{av}}(k)$.}
    \label{fig:4}
\end{figure}
\begin{figure}
    \centering
    \scalebox{.4}{
    \includegraphics{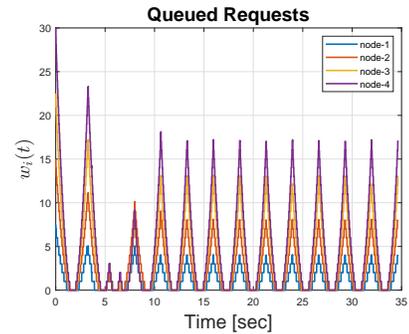}}
    \caption{Queued requests.}
    \label{fig:5}
\end{figure}
\begin{figure}
    \centering
    \scalebox{.4}{
    \includegraphics{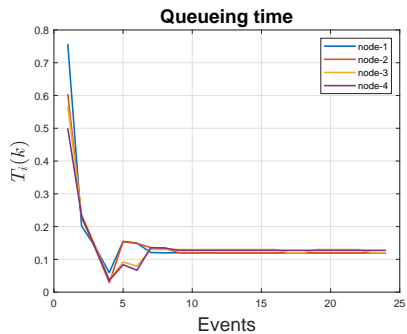}}
    \caption{Queueing time.}
    \label{fig:6}
\end{figure}

As can be seen from Fig. \ref{fig:1}, the cycle period converges as expected to $T^{\ast} = 1.33$ [sec]. Viewing Fig. \ref{fig:2}, admission rates $u_{i}(k)$, $i=1,\ldots,4$, also converge to $u_{i}^{\ast} = \frac{\alpha_{i}}{1-\beta_{i}}T^{\ast}$, $i=1,\ldots,4$, verifying the validity of Theorem \ref{thm:PhiStability}. Fig. \ref{fig:3} illustrates typical AIMD behaviour with convergence occurring after approximately  $10$ events. This convergence rate is related to the particular choice of AIMD parameters. For example, faster convergence is expected if growth rates $\alpha_{i}$, $i=1,\ldots,4$, are selected more aggressively. Service rates depicted in Fig. \ref{fig:4} are calculated according to resource allocation law \eqref{eq:feedback_gamma_i}. From the figure, it is evident that the service rate mean value of each node is heavily related to the corresponding average AIMD admission rate. Queue profiles are shown in Fig. \ref{fig:5}, where it is evident that queues are bounded highlighting the stability properties of Theorem \ref{thm:wi(k+1)}. Invariant sets $\mathcal{W}_{i}(k)$ for $T(k) = T^{\ast}$ are given in Table \ref{table:2}. Overall, we note that under the proposed scheduling and resource allocation strategy, stable operation is guaranteed for all computing nodes regardless of the tuning of individual AIMD parameters. We refer interested readers to \cite{githubcdc21} for further simulation scenarios with arbitrary number of nodes. Therein, a script for a random arrival process with exponentially distributed inter-arrival times is also available.


\section{Conclusion}\label{section:conclusion}

We study the problem of simultaneous scheduling and resource allocation of a deterministic flow of requests entering a system of computing nodes which is represented as a multi-queue scheme. Inspired by the well-established AIMD algorithm, we present a new admission control policy for general scheduling problems. Using an interesting property of rank-one perturbations of symmetric matrices, we provide stability guarantees, independent of the overall system dimension and the AIMD tuning. Following a bottom-up approach, we then propose a resource allocation strategy defined as a decentralized nonlinear feedback controller which is globally stabilizing. This effectively guarantees that individual queues are bounded converging in finite time to a well-defined interval. Finally, we associated these properties with Quality of Service specifications, by calculating the local queueing time via a simple formula consistent with Little's Law. Our method is simple, scalable, and locally configurable. It is worth noting however that further effort is required to formally address two additional challenges, namely, non deterministic workload and the presence of resource constraints. This is the subject of our immediate future research efforts. 

\bibliographystyle{IEEEtran} 
\balance
\bibliography{IEEEabrv,biblio}

\end{document}